\newcommand{\gt}[1]{\xrightarrow{#1}}
\newcommand{\NP}{\text{NP}\xspace}
\newtheorem{theorem}{Theorem}
\newtheorem{lemma}[theorem]{Lemma}
\newtheorem{corollary}[theorem]{Corollary}
\newtheorem{definition}[theorem]{Definition}
\title{\LARGE \bf Secret Protection in Labeled Petri Nets*}
\author{Stefan Haar, Tom\'{a}\v{s} Masopust, and Jakub Ve\v{c}e\v{r}a%
\thanks{*The research of J. Ve\v{c}e\v{r}a has partially been supported by the Palacky University Olomouc under the IGA grant number PrF 2025 018.}
\thanks{S. Haar is with INRIA France {\tt\small stefan.haar@inria.fr}}%
\thanks{T. Masopust and J. Ve\v{c}e\v{r}a are with the Faculty of Science, Palacky University Olomouc, Olomouc, Czechia {\tt\small tomas.masopust@upol.cz}, {\tt\small jakub.vecera01@upol.cz}}%
}
\begin{document}

\maketitle
\thispagestyle{empty}
\pagestyle{empty}

\begin{abstract}
    We study the secret protection problem (SPP), where the objective is to find a policy of minimal cost ensuring that every execution path from an initial state to a secret state contains a sufficient number of protected events. The problem was originally introduced and studied in the setting of finite automata. In this paper, we extend the framework to labeled Petri nets. We consider two variants of the problem: the Parikh variant, where all occurrences of protected events along an execution path contribute to the security requirement, and the indicator variant, where each protected event is counted only once per execution path. We show that both variants can be solved in exponential space for labeled Petri nets, and that their decision versions are \textsc{ExpSpace}-complete. As a consequence, there is no polynomial-time or polynomial-space algorithm for these problems.
\end{abstract}

\section{Introduction}
    Modern systems and applications process sensitive information ranging from personal data to financial credentials. Preventing unauthorized access to such data is a fundamental security objective. This motivation has led to the formalization of the \emph{Secret Protection Problem} (SPP), which captures the requirement that every execution reaching sensitive parts of a system must satisfy a prescribed number of security checks.
    
    The secret protection problem was introduced by Matsui and Cai~\cite{MatsuiC19} and further investigated by Ma and Cai~\cite{MaC22}. In their model, the system is represented by a finite automaton equipped with a designated set of \emph{secret states}. Each secret state is associated with a non-negative integer that specifies the minimum number of security checks that must be performed before the state may be reached. The event set is partitioned into \emph{protectable} and \emph{unprotectable} events. A protectable event may be assigned a security check with an associated non-negative \emph{cost} and a \emph{clearance level} indicating how many clearance units are granted upon its execution. The objective of SPP is to identify a protecting policy---a set of protected events---of minimal cost that ensures that every execution path leading to a secret state meets its assigned security requirement.

    Initial results showed that, for automata, SPP is solvable in polynomial time under certain restrictive assumptions, such as unique event labels for transitions and a uniform clearance level of one~\cite{MaC22}. These results were later extended by Ma and Cai~\cite{MaC25}, who maintained the unique labeling assumption while allowing more sophisticated features, such as dynamic clearance.

		However, it was subsequently shown that dropping the unique event labeling assumption renders SPP \NP-hard. In fact, the hardness persists even if the cost function, clearance function, and security requirements take only binary values. Furthermore, no sub-exponential-time algorithm exists for SPP unless the Exponential Time Hypothesis fails~\cite{MV2025}.
	
    In this paper, we generalize the secret protection problem from finite automata to labeled Petri nets. While an automaton naturally models a single instance of a user operating in the system, Petri nets can represent multiple concurrent instances of the user. Intuitively, each token in the net corresponds to one instance of the user's position within the system. A user may occupy several such positions simultaneously---for instance, when multiple windows of an information system are open in a browser.
    
    Within our framework, a token residing in a place of the net represents an instance of the user being at, or viewing, the corresponding position in the system. If that position is designated as secret, i.e., the respective place has a non-zero security requirement, then the requirement must be satisfied whenever a token enters the place.
    
    We consider two variants of SPP that differ in the way how protected events contribute to meeting security requirements. In the Parikh variant, every occurrence of a protected event in an execution contributes to clearance. In the indicator variant, each protected event contributes only once, regardless of how many times it appears along the execution. The former scenario reflects, for example, multiple browser windows, where each session operates independently and may require repeated checks, whereas the latter corresponds to multiple tabs within a single browser instance, where authentication or other checks may be temporarily reused without re-entering credentials.
    
    Although the two variants yield different complexity results in the automata setting~\cite{MV2025}, we show that, for labeled Petri nets, both problems can be solved in exponential space, and their decision versions are \textsc{ExpSpace}-complete. As a consequence, there is no polynomial-time or polynomial-space algorithm to solve either variant of SPP for labeled Petri nets.

    Our results remain robust even under strong restrictions. In particular, \textsc{ExpSpace}-hardness persists if 
    \begin{enumerate}
        \item every transition carries a unique label, or 
        \item there is only a single protectable event (transition). 
    \end{enumerate}
    This result stands in contrast to the automata setting, where unique event labeling makes the problem tractable in polynomial time.

    Finally, we show that the complexity does not change even if different protectable events follow different counting schemes: Parikh and indicator semantics can be combined within a single instance of SPP without affecting the overall \textsc{ExpSpace}-completeness of the problem.

\section{Preliminaries}
	We assume that the reader is familiar with the basic concepts and results on labeled Petri nets~\cite{Mur89}.

	The set of \emph{natural numbers} (including zero) is denoted by $\mathbb{N}$.
	Let $S$ be a set. The cardinality of $S$ is denoted by $|S|$, and the power set of $S$ is denoted by $2^S$. 

	An \emph{alphabet} $\Sigma$ is a finite nonempty set of \emph{events}. A \emph{string} over $\Sigma$ is a finite sequence of events. The set of all strings over $\Sigma$ is denoted by $\Sigma^*$, and the empty string is denoted by $\varepsilon$. A \emph{language} $L$ over $\Sigma$ is a subset of $\Sigma^*$. For a string $u\in \Sigma^*$ and an event $a\in \Sigma$, the number of occurrences of $a$ in $u$ is denoted by $|u|_a$.
	
	A decision problem belongs to \textsc{ExpSpace} if it can be solved by a deterministic Turing machine using at most $2^{p(n)}$ space on inputs of size $n$, for some polynomial $p$. The class \textsc{coExpSpace} contains all complements of problems in \textsc{ExpSpace}. It is known that \textsc{ExpSpace} is closed under complement, and therefore \textsc{ExpSpace} $=$ \textsc{coExpSpace}.

\subsection{Labeled Petri nets}
	A \emph{Petri net} is a triple $N = (P, T, \mathcal{F})$, where $P$ and $T$ are finite nonempty disjoint sets of \emph{places} and \emph{transitions}, respectively, and 
	$
	    \mathcal{F}\colon (P \times T) \cup (T \times P)\to \mathbb{N}
	$
	is the \emph{flow function} that specifies the (multiplicity of) arcs from places to transitions and from transitions to places. In the rest of the paper, we use the notation $P=\{p_1,\ldots,p_m\}$ and $T=\{t_1,\ldots,t_n\}$. In particular, we use $m$ to denote the number of places and $n$ to denote the number of transitions.
	
	A \emph{marking} is a map $M\colon P \to \mathbb{N}$ that assigns to each place a number of \emph{tokens}. A \emph{transition} $t$ is \emph{enabled} in $M$, denoted by $M\gt{~t~}$, if $M(p) \ge \mathcal{F}(p, t)$ for every place $p \in P$. The \emph{firing} of an enabled transition $t$ in $M$ leads to the marking $M'$, where 
	$
	    M'(p) = M(p) - \mathcal{F}(p, t) + \mathcal{F}(t,p)
	$
	for every $p \in P$, denoted by $M\gt{~t~}M'$.

	For a finite transition sequence $\sigma\in T^*$, we write $M \gt{~\sigma~}$ to indicate that $\sigma$ is enabled at marking $M$, and $M \gt{~\sigma~} M'$ to indicate that $\sigma$ is enabled at $M$ and leads to marking $M'$. A marking $M'$ is \emph{reachable from} $M$ if there exists a transition sequence $\sigma\in T^*$ such that $M \gt{~\sigma~} M'$. 
	
	A \emph{labeled Petri net} (LPN) is a tuple $\mathcal{N} = (P,T,\mathcal{F},\Sigma, \lambda)$, where $(P,T,\mathcal{F})$ is a Petri net, $\Sigma$ is an alphabet of events or labels, and $\lambda \colon T \to \Sigma$ is a \emph{labeling function} that assigns events of $\Sigma$ to transitions. 
	The function $\lambda$ can be seen as a homomorphism  $\lambda\colon T^* \to \Sigma^*$ satisfying $\lambda(\varepsilon)=\varepsilon$ and $\lambda(uv) = \lambda(u)\lambda(v)$, extending thus to the sequences of transitions.
	Let
	$
	    L(\mathcal N,M) = \{ \lambda(\sigma) \mid \sigma \in T^*,\, M \xrightarrow{~\sigma~} \}
	$
	denote the language of observable sequences from $M$.

\section{Secret Protection Problem}
    Let $\mathcal{N} = (P,T,\mathcal{F},\Sigma,\lambda)$ be a labeled Petri net, and let $M \in \mathbb{N}^{m}$ be an initial marking. 
    A {\em security requirement\/} is a function 
    $
        \ell\colon P \to \mathbb{N}
    $
    assigning to each place $p\in P$ a value $\ell(p)$, which specifies the minimum amount of clearance required before a token may enter $p$. We assume that every place initially marked in $M$ has security requirement~$0$.

    In practice, not every transition can be protected. To reflect this situation, the event set $\Sigma$ is partitioned into \emph{protectable events\/} $\Sigma_p$ and \emph{unprotectable events\/} $\Sigma_{up}$, written $\Sigma = \Sigma_p \uplus \Sigma_{up}$. Protectable events correspond to transitions for which protection is possible, while unprotectable events label transitions that cannot be secured. Each protectable event is further associated with
	\begin{itemize}
		\item a \emph{clearance function} $\gamma \colon \Sigma_p \to \mathbb{N}$, specifying the number of clearance units granted upon executing a transition labeled by that event, and
		
		\item a \emph{cost function} $c \colon \Sigma_p \to \mathbb{R}_{\geq 0}$, assigning the implementation cost of protecting each protectable event.
	\end{itemize}
	
    For every protectable event $a\in \Sigma_p$, let $\chi_a \colon \Sigma_p^* \to \mathbb{N}$ be a function that assigns a natural number to each string over $\Sigma_p$. We denote by \[\chi = \{\chi_a \mid a \in \Sigma_p\}\] the family of these functions. In general, the functions in $\chi$ may be arbitrary, but, in this paper, we focus on two specific instances:
    \begin{enumerate}
        \item the \emph{Parikh function} $\chi_a(w) = |w|_a$, which counts every occurrence of $a$ in $w$, and
        \item the \emph{indicator function} 
        \[
            \chi_a(w) = 
            \begin{cases}
                1 & \text{ if $a$ occurs in $w$,}\\ 
                0 & \text{ otherwise,}
            \end{cases}
        \]
        which counts each protectable event in $w$ only once.
    \end{enumerate}

    A single event may label multiple transitions. Under the Parikh semantics, every occurrence of a protectable event contributes to clearance, whereas under the indicator semantics, each protectable event contributes at most once per execution. This reflects situations where repeated executions of the same event either require separate authentication steps or reuse an existing one.

\begin{figure*}
    \centering
    \adjustbox{valign=c}{\includegraphics[scale=1]{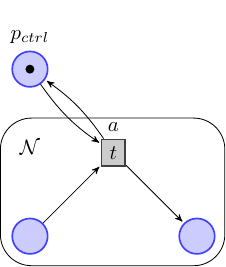}}
    ~~$\leadsto$~~
    \adjustbox{valign=c}{\includegraphics[scale=1]{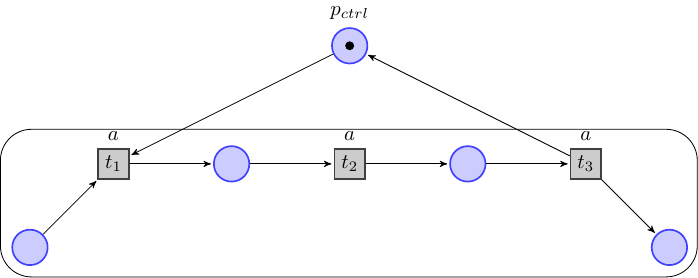}}
    \caption{Sketch of the construction from the proof of Lemma~\ref{lemma2} for $\gamma(a)=3$ (left) and the result of the construction with $\gamma(a)=1$ on the right.}
    \label{fig00}
\end{figure*}

	A \emph{protecting policy\/} (or simply \emph{policy}) is a subset $\mathcal{P} \subseteq \Sigma_p$. 
    A policy $\mathcal{P}$ is \emph{$\chi$-valid\/} if, for every $w \in L(\mathcal{N},M)$ that leads to a marking $M'$ and for every place $p$ with $M'(p) > 0$,
    \[
	    \sum_{a\in \mathcal P} \Bigl(\gamma(a) \cdot \chi_a(w) \Bigr) \ge \ell(p)\,.
	\]
	The cost of a policy $\mathcal{P}$ is the sum of the costs of all protected events, given by 
	\[
		C(\mathcal{P}) = \sum_{a \in \mathcal{P}} c(a)\,.
	\] 
	A policy is \emph{$\chi$-optimal\/} if it is $\chi$-valid and no other $\chi$-valid policy has a lower cost.
	
	We now define the \emph{$\chi$-optimal secret protection problem}.
	\begin{definition}[$\chi$-optimal SPP ($\chi$-SPP)]
		Given a labeled Petri net $\mathcal{N} = (P,T,\mathcal F, \Sigma, \lambda)$, where $\Sigma = \Sigma_p \uplus \Sigma_{up}$, along with an initial marking $M$, a security requirement $\ell$, a clearance function $\gamma$, and a cost function $c$, the objective is to determine the $\chi$-optimal protecting policy $\mathcal P\subseteq \Sigma_p$.
	\end{definition}

	A special case of the secret protection problem, known as the \emph{$\chi$-optimal uniform secret protection problem} ($\chi$-SPP-U), assumes that the clearance level for every event is uniformly set to one, that is, $\gamma(a) = 1$ for every $a \in \Sigma_p$. 
	
	Despite this restriction, $\chi$-SPP-U is equivalent to $\chi$-SPP in the sense that for every instance of $\chi$-SPP, there is an instance of $\chi$-SPP-U such that the costs of the $\chi$-optimal protecting policies coincide (see Lemmata~\ref{lemma2} and~\ref{lemma3}).
		
    Finally, we formulate the decision version of $\chi$-SPP, which we call the budget-constained $\chi$-SPP.
	\begin{definition}[Budget-constrained $\chi$-SPP (BC-$\chi$-SPP)]
	    Given a labeled Petri net $\mathcal{N} = (P,T,\mathcal F, \Sigma, \lambda)$, where $\Sigma = \Sigma_p \uplus \Sigma_{up}$, along with an initial marking $M$, a security requirement $\ell$, a clearance function $\gamma$, a cost function $c$, and a budget limit $W \in \mathbb{N}$, the \emph{budget-constrained $\chi$-optimal secret protection problem} (BC-$\chi$-SPP) asks whether there exists a $\chi$-valid protecting policy $\mathcal{P} \subseteq \Sigma_p$ such that the total cost of the policy satisfies $C(\mathcal{P}) \leq W$.
	\end{definition}

	Analogously, we define the budget-constrained version for $\chi$-SPP-U, denoted by BC-$\chi$-SPP-U.

\section{Parikh-SPP}
    In this section, we analyze the complexity of the optimization and decision versions of $\chi$-SPP and $\chi$-SPP-U under the Parikh semantics; that is, we assume $\chi_a(w) = |w|_a$ for every $a \in \Sigma_p$. We refer to the resulting optimization problems as \emph{Parikh-SPP} and \emph{Parikh-SPP-U}, and to their decision versions as \emph{BC-Parikh-SPP} and \emph{BC-Parikh-SPP-U}.
    
    We first show that Parikh-SPP-U and Parikh-SPP are equivalent.
    \begin{lemma}\label{lemma2}
        For every instance of Parikh-SPP, there is an instance of Parikh-SPP-U such that the costs of the respective Parikh-optimal protecting policies coincide.
    \end{lemma}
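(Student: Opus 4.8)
The plan is to realize each protectable transition by a short chain of unit‑clearance transitions carrying the same label, exactly as sketched in Figure~\ref{fig00}. Given an instance of Parikh-SPP, we build the Parikh-SPP-U instance as follows: for every protectable transition $t$ with $\lambda(t)=a$ and $k:=\gamma(a)\ge 1$, introduce $k-1$ fresh places $q^t_1,\dots,q^t_{k-1}$, each with security requirement $0$, and replace $t$ by transitions $t^{(1)},\dots,t^{(k)}$, all labeled $a$, where $t^{(1)}$ inherits the input arcs of $t$, $t^{(k)}$ inherits the output arcs of $t$, and every consecutive pair $t^{(i)},t^{(i+1)}$ is linked through $q^t_i$ by arcs of multiplicity one. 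We set $\gamma'(a)=1$ for every protectable $a$ and leave $\Sigma_p$, the cost function $c$, and the requirement $\ell$ on the original places unchanged. A protectable event $a$ with $\gamma(a)=0$ can simply be moved into $\Sigma_{up}$: protecting it never contributes any clearance, so no optimal policy needs it and the optimal cost is unaffected. (The construction is polynomial when $\gamma$ is given in unary; with a binary encoding it is still correct but of exponential size, which is harmless here since only the existence of the Parikh-SPP-U instance is claimed.)

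Since the cost function $C$ is literally unchanged, it suffices to show that the two instances have exactly the same valid policies; then their optimal costs coincide, and in particular one instance admits a valid policy iff the other does. For the easy direction, ``valid in the new instance $\Rightarrow$ valid in the original'', we use the obvious forward simulation: a firing sequence $\sigma_0$ of the original net maps to a firing sequence $\hat\sigma_0$ of the new net by replacing each firing of a protectable $t$ with its chain $t^{(1)}\cdots t^{(k)}$. This is enabled, ends with all fresh places empty and the original places unchanged, and satisfies $|\lambda(\hat\sigma_0)|_a=\gamma(a)\cdot|\lambda(\sigma_0)|_a$ for every $a\in\Sigma_p$, so the accumulated clearance $\sum_{a\in\mathcal P}\gamma'(a)|\lambda(\hat\sigma_0)|_a=\sum_{a\in\mathcal P}\gamma(a)|\lambda(\sigma_0)|_a$ of any policy $\mathcal P$ is preserved exactly. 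Hence every original reachable marking that marks a place $p$ is mirrored in the new net with the same clearance, so a violation of $\ell(p)$ in the original net would force one in the new instance.

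The substantial direction is ``valid in the original $\Rightarrow$ valid in the new instance''. Fix a valid policy $\mathcal P$ and a firing sequence $\sigma$ of the new net reaching a marking that marks some place $q$. If $q$ is a fresh place, $\ell'(q)=0$ and there is nothing to prove, so assume $q$ is an original place and let $\tau$ be the prefix of $\sigma$ at the first instant $q$ becomes marked. From $\tau$ we extract a firing sequence $\sigma_0$ of the original net by deleting all transitions belonging to chains that are still \emph{in progress} (a token has entered but not yet left the chain) and reading each completed chain $t^{(1)}\cdots t^{(k)}$ as a single firing of $t$. Deleting an in‑progress chain only restores tokens to original places, so every surviving step stays enabled and $\sigma_0$ is legal; moreover its reached marking dominates $\tau$'s marking on the original places, so $q$ is still marked after $\sigma_0$, whence $\sum_{a\in\mathcal P}\gamma(a)\cdot|\lambda(\sigma_0)|_a\ge\ell(q)$ by validity of $\mathcal P$ in the original net. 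Since $|\lambda(\tau)|_a$ contains $\gamma(a)$ copies of $a$ for each completed $a$-chain (and possibly more from in‑progress chains), we get $\sum_{a\in\mathcal P}|\lambda(\tau)|_a\ge\sum_{a\in\mathcal P}\gamma(a)\cdot|\lambda(\sigma_0)|_a\ge\ell(q)=\ell'(q)$, and because clearance is monotone along prefixes the same bound holds for $\sigma$ in place of $\tau$, which is exactly the required inequality.

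The hard part will be the bookkeeping in the last paragraph: one must verify carefully that deleting the in‑progress chains really yields an enabled firing sequence of the original net whose marking dominates the one reached by $\tau$, and that passing to the prefix $\tau$ (rather than to $\sigma$ itself) keeps the clearance inequality pointing in the right direction—intuitively, a token may reach $q$ in the new net while other chains are only partially traversed, so we must compare against the moment $q$ is first entered rather than the end of the run. Everything else is routine.
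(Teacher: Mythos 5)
Your construction is the same chain gadget as the paper's (each protectable transition with $\gamma(a)=k$ becomes a line of $k$ unit-clearance transitions all labeled $a$), and your argument is correct, but you diverge from the paper in one genuine respect: the paper first adds a control place $p_{ctrl}$ with a self-loop on every transition, acquired by $t^{(1)}$ and released by $t^{(k)}$, so that each chain executes atomically and the runs of the two nets are in exact bijection; you omit this mutex and instead argue directly that partially executed or interleaved chains are harmless. What the paper's mutex buys is that the "hard part" you flag at the end simply disappears. What your route buys is a slightly leaner net and a reusable observation: an in-progress chain can only \emph{remove} tokens from original places (only $t^{(1)}$ touches them on the input side, only $t^{(k)}$ on the output side) while only \emph{adding} occurrences of $a$, so it can never create a violation that the original net does not already exhibit. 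For the record, the bookkeeping you defer does go through: in any prefix the firing counts along a chain satisfy $\#t^{(1)}\ge\#t^{(2)}\ge\cdots\ge\#t^{(k)}$ because each intermediate place starts empty, so one can keep exactly the $\#t^{(k)}$ completed instances, replace each by a firing of $t$ at the position of its $t^{(k)}$, and check (e.g., by the projection that rolls every token in an intermediate place $q^t_i$ back to $\mathcal F(\cdot,t)$ tokens on the original places) that the resulting sequence is enabled in the original net and reaches a marking dominating $\tau$'s marking on the original places. Your handling of $\gamma(a)=0$ and the final monotonicity step are fine; one could even apply the rollback to $\sigma$ itself rather than to the first-marking prefix $\tau$, since the validity condition only constrains the clearance of the full run against the places marked at its end.
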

	\begin{proof}
	    Intuitively, we simulate the firing of a transition labeled by a protectable event $a$ with clearance $\gamma(a)=k>1$ by a sequence of $k$ transitions, each with uniform clearance~$1$. Consider an instance of Parikh-SPP given by a labeled Petri net $\mathcal N$. The construction of an equivalent instance of Parikh-SPP-U proceeds in two steps.

	    \textbf{Step 1.} Create a copy $\mathcal N'$ of $\mathcal N$ and add a control place $p_{ctrl}$ initially marked with a single token. Connect $p_{ctrl}$ to every transition of $\mathcal N'$ by a self-loop (see Fig.~\ref{fig00}, left). Clearly, this modification preserves the set of firing sequences.
        
	    \textbf{Step 2.} For every transition $t$ of $\mathcal N$ labeled by an event $a$ with $\gamma(a)=k>1$, we replace $t$ in $\mathcal N'$ by a chain of $k$ fresh transitions $t_1,\dots,t_k$ and $k-1$ fresh places $p_{t_1},\dots,p_{t_{k-1}}$. For each place $p$ of $\mathcal N$, we set
        $
            \mathcal F(p,t_1)=\mathcal F(p,t)\quad\text{and}\quad \mathcal F(t_k,p)=\mathcal F(t,p).
        $
        For $i=1,\dots,k-1$, we define
        $
            \mathcal F(t_i,p_{t_i})=\mathcal F(p_{t_i},t_{i+1})=1,
        $
        and add
        $
            \mathcal F(p_{ctrl},t_1)=1 \text{ and } \mathcal F(t_k,p_{ctrl})=1.
        $
        We label each $t_i$ by the event $a$ and set $\gamma(a)=1$; see Fig.~\ref{fig00} for the case $k=3$.

	    By construction, every firing of $t$ in $\mathcal N$ is simulated in $\mathcal N'$ by the sequence $t_1,\dots,t_k$, and vice versa. The control place $p_{ctrl}$ ensures that once $t_1$ fires, no other transition (except $t_2,\dots,t_k$ in order) can fire until $t_k$ restores the token in $p_{ctrl}$; hence the simulation is atomic and introduces no spurious interleavings.
	    
	    Finally, protecting policies and their costs are preserved: a policy $\mathcal P\subseteq\Sigma_p$ for $\mathcal N$ corresponds to the same policy $\mathcal P$ in $\mathcal N'$ (the new transitions carry the same labels). Validity is preserved because each occurrence of $a$ in $\mathcal N$ corresponds to $k$ occurrences in $\mathcal N'$, each contributing one unit under uniform clearance. Thus, an optimal policy of $\mathcal N$ is an optimal policy of $\mathcal N'$, which completes the proof.
    \end{proof} 

    Our next result makes use of the increasing fragment of Yen's path logic~\cite{yen1992unified}, whose satisfiability problem is known to be \textsc{ExpSpace}-complete~\cite{ATIG2011}. For completeness, we briefly recall this fragment.

    Let $\mu_1, \mu_2, \dots$ be variables representing markings, and let $\sigma_1, \sigma_2, \dots$ be variables representing finite sequences of transitions. Every mapping $c \in \mathbb{N}^{m}$ is a \emph{term}, where $m$ denotes the number of places in a Petri net and thus the dimension of the markings. Terms are closed under addition and subtraction: for all $j > i$, if $\mu_i$ and $\mu_j$ are marking variables, then $\mu_j - \mu_i$ is a term, and if $T_1$ and $T_2$ are terms, then $T_1 + T_2$ and $T_1 - T_2$ are also terms.

    Given terms $T_1$ and $T_2$ and places $p_1, p_2 \in P$, expressions of the form
    $
        T_1(p_1) = T_2(p_2),\ T_1(p_1) < T_2(p_2),\ T_1(p_1) > T_2(p_2)
    $
    are called \emph{marking predicates}. More generally, a \emph{predicate} is any positive Boolean combination of marking predicates.
    A {\em path formula\/} is a formula of the form
    \begin{multline*}
        (\exists \mu_1,\dots, \mu_n)
        (\exists \sigma_1, \sigma_2,\dots, \sigma_n) \\
        (\mu_0 \xrightarrow{\sigma_1} \mu_1 \xrightarrow{\sigma_2} \cdots \xrightarrow{\sigma_n} \mu_n) \\
        \land
        \varphi(\mu_1,\ldots,\mu_n,\sigma_1,\ldots,\sigma_n)
    \end{multline*}
    where $\varphi$ is a predicate over the markings and transition sequences that implies $\mu_n \ge \mu_1$. The condition $\mu_n \ge \mu_1$ expresses that the final marking $\mu_n$ dominates the marking $\mu_1$, capturing the essence of the \emph{increasing fragment} in Yen’s path logic.
    
    We now prove that, in the setting of labeled Petri nets, the Parikh-optimal secret protection problem and its uniform variant are both \textsc{ExpSpace}-complete.
    \begin{theorem}\label{Parikh-SPP-complexity-membership}
        Parikh-SPP for labeled Petri nets is solvable in exponential space.
    \end{theorem}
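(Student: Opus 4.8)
\emph{Proof plan.}
A protecting policy is a subset of $\Sigma_p$, so it has polynomial size, and there are only $2^{|\Sigma_p|}$ candidate policies. The plan is to enumerate all policies $\mathcal P\subseteq\Sigma_p$ one at a time, using a counter of polynomial bit-length whose work space is reused between iterations, to decide for each $\mathcal P$ whether it is Parikh-valid, and to keep the cheapest valid policy seen so far; the running minimum cost and the running best policy are stored on the side and also occupy only polynomial space. The whole procedure then runs in exponential space as soon as we can decide Parikh-validity of a single fixed policy in exponential space, which is the heart of the argument.

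So fix a policy $\mathcal P$. By definition, $\mathcal P$ is \emph{not} Parikh-valid exactly when there is a \emph{bad run}: a firing sequence $\sigma$ with $M\gt{~\sigma~}M'$ and a place $p$ (which we may assume has $\ell(p)\ge 1$) such that $M'(p)>0$ and $\sum_{a\in\mathcal P}\gamma(a)\,|\lambda(\sigma)|_a<\ell(p)$. The accumulated clearance is a linear function of the Parikh image of $\sigma$, since
\[
    \sum_{a\in\mathcal P}\gamma(a)\,|\lambda(\sigma)|_a \;=\; \sum_{t\,:\,\lambda(t)\in\mathcal P}\gamma(\lambda(t))\,|\sigma|_t ,
\]
so it can be recorded in a fresh place: let $\mathcal N_{\mathcal P}$ be $\mathcal N$ extended by a place $q$ together with, for every transition $t$ with $\lambda(t)\in\mathcal P$, an arc from $t$ to $q$ of multiplicity $\gamma(\lambda(t))$. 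These multiplicities are written in binary, so $\mathcal N_{\mathcal P}$ has polynomial size; the place $q$ starts empty and is never consumed, hence is monotone and, after firing $\sigma$, holds precisely the accumulated clearance. Consequently $\mathcal P$ is invalid iff, for some place $p$ with $\ell(p)\ge 1$, the net $\mathcal N_{\mathcal P}$ reaches a marking $\mu$ with $\mu(p)\ge 1$ and $\mu(q)\le\ell(p)-1$.

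This reachability condition is the satisfiability of an increasing path formula in the sense recalled above: it is $(\exists\mu_1)(\exists\sigma_1)\,\bigl(\mu_0\gt{~\sigma_1~}\mu_1\bigr)\wedge\varphi_p$, where $\mu_0$ is the initial marking of $\mathcal N_{\mathcal P}$ and $\varphi_p$ states $\mu_1(p)>0$ and $\mu_1(q)<\ell(p)$, both of which are marking predicates, namely comparisons of a coordinate of $\mu_1$ with a constant term. Since $\varphi_p$ trivially implies $\mu_1\ge\mu_1$, the formula lies in the increasing fragment, whose satisfiability is decidable in \textsc{ExpSpace}~\cite{yen1992unified,ATIG2011}. (Equivalently, the test is plain coverability: replace $q$ by a place preloaded with $\ell(p)-1$ tokens that is drained by $\gamma(\lambda(t))$ tokens whenever a protected transition $t$ fires, so that the admissible runs are exactly those of $\mathcal N$ with accumulated clearance at most $\ell(p)-1$, and a bad run for $p$ then exists iff the marking with one token in $p$ is coverable in this polynomially sized net.)

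It remains to assemble the pieces. For a fixed $\mathcal P$ we run the bad-run test for each of the at most $m$ places $p$ with $\ell(p)\ge 1$; $\mathcal P$ is Parikh-valid iff every test is negative, which is a \textsc{coExpSpace} question and hence, as \textsc{ExpSpace} $=$ \textsc{coExpSpace}, lies in \textsc{ExpSpace}. Nesting this validity test inside the enumeration over all $2^{|\Sigma_p|}$ policies and recording the cheapest valid one keeps the whole computation in exponential space, the only additional storage being the polynomial-size enumeration counter and current best. One point to watch is that we deliberately work with the binary-encoded clearance function $\gamma$ and do \emph{not} route through the reduction of Lemma~\ref{lemma2}: expanding a transition of clearance $\gamma(a)$ into a chain of $\gamma(a)$ transitions would enlarge the net exponentially and cost an extra exponential, whereas keeping the arc multiplicities $\gamma(\lambda(t))$ in binary leaves $\mathcal N_{\mathcal P}$ polynomial. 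The main obstacle is precisely the \emph{upper} bound ``$<\ell(p)$'' on the accumulated clearance, which is not a coverability property as written; the two devices above -- an accumulator place inspected by an increasing Yen formula, and a preloaded budget place that is drained to turn the requirement into ordinary coverability -- are what make the test fit into exponential space.
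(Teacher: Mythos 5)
Your proposal is correct and follows essentially the same route as the paper: enumerate the exponentially many policies while reusing space, augment the net with a counter place that accumulates $\gamma(a)$ tokens per firing of a $\mathcal P$-labeled transition, express a violating run as an increasing Yen path formula with a single marking variable, and invoke \textsc{ExpSpace}-decidability of that fragment together with closure under complement. Your parenthetical reduction to plain coverability via a preloaded, drained budget place, and your explicit care about binary-encoded arc multiplicities (avoiding the expansion of Lemma~\ref{lemma2}), are welcome refinements but do not change the argument.
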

    \begin{proof}
        To prove membership of Parikh-SPP in \textsc{ExpSpace}, let $\mathcal N$ be a labeled Petri net with $m$ places $p_1, \dots, p_m$. Since a protecting policy is a subset of the labels of transitions, we can enumerate all candidate policies in exponential space. For each such policy, we verify Parikh-validity as described below. To compute an optimal solution, we keep track of the Parikh-valid policy of minimum cost. The cost of a given policy can be computed in polynomial time.
        
        Let $\mathcal{P}$ be a fixed protecting policy. From $\mathcal{N}$, we construct a labeled Petri net $\mathcal{N}_{\mathcal{P}}$ by adding a place $p_c$ that serves as a counter for occurrences of events in $\mathcal{P}$ along a firing sequence. Specifically, whenever a transition labeled by some $a \in \mathcal{P}$ fires, $\gamma(a)$ tokens are added to $p_c$; see Fig.~\ref{fig01}.
        
        \begin{figure}
          \centering
          \includegraphics[scale=.99]{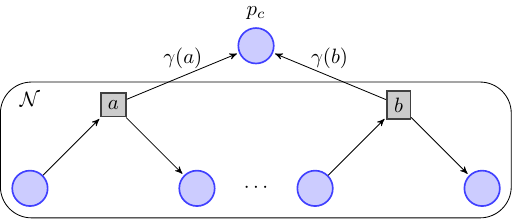}
          \caption{Sketch of the construction from the proof of Theorem~\ref{Parikh-SPP-complexity-membership}.}
          \label{fig01}
        \end{figure}

        To verify Parikh-validity of $\mathcal P$, we express its violation via the increasing fragment of Yen's path logic as follows:
        \begin{multline*}
            \exists \mu_1 \exists \sigma \Bigl( 
                (\mu_0 \xrightarrow{\sigma} \mu_1) 
                \land \\
                \bigvee_{i=1}^{m} \bigl(\mu_1(p_i) > 0 \land (\mu_1-\mu_0)(p_c) < \ell(p_i)\bigr) \Bigr)\,.
        \end{multline*}

        The formula asks whether there exists a firing sequence $\sigma$ from the initial marking $\mu_0$ to a marking $\mu_1$ such that (i) some place $p_i$ of $\mathcal{N}$ contains at least one token in $\mu_1$, and (ii) the number of tokens accumulated in the counter place $p_c$ is strictly below the required clearance~$\ell(p_i)$. Equivalently,
        \[
	        \sum_{a\in \mathcal P} \Bigl(\gamma(a) \cdot \chi_a(\lambda(\sigma)) \Bigr) < \ell(p_i)\,,
        \]
        which captures a violation of Parikh-validity.
        
        Since $\mu_0(p_c)=0$, the term $(\mu_1 - \mu_0)(p_c)$ is used only to conform to the syntax of Yen's framework; it could be replaced by $\mu_1(p_c) < \ell(p_i)$. The quantity $\ell(p_i)$ is admissible because $\ell$ is a constant vector and may be referenced in the logic.
               
        Satisfiability for the increasing fragment of Yen's path logic is \textsc{ExpSpace}-complete~\cite{ATIG2011}. Hence, checking the existence of such a violating run is decidable in exponential space. By closure of \textsc{ExpSpace} under complement, verifying Parikh-validity of $\mathcal{P}$ is also in \textsc{ExpSpace}.
        
        Since each policy can be checked using exponential space, and the total number of candidate policies is exponential, we conclude that Parikh-SPP is solvable in exponential space.
    \end{proof}
    
    As an immediate consequence of Theorem~\ref{Parikh-SPP-complexity-membership}, we have the following corollary.
    \begin{corollary}
        Parikh-SPP-U for labeled Petri nets is solvable in exponential space.
        $\hfill$\QED
    \end{corollary}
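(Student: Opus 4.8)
The plan is to observe that this corollary is an immediate specialisation of Theorem~\ref{Parikh-SPP-complexity-membership}. By definition, Parikh-SPP-U is precisely Parikh-SPP restricted to those instances whose clearance function satisfies $\gamma(a)=1$ for every $a\in\Sigma_p$; ``uniform'' is a syntactic restriction on the admissible inputs, not a change of semantics. Consequently, every instance of Parikh-SPP-U is already a legal instance of Parikh-SPP, and the exponential-space procedure constructed in the proof of Theorem~\ref{Parikh-SPP-complexity-membership} applies without modification: enumerate all candidate policies $\mathcal P\subseteq\Sigma_p$ (exponentially many, each of polynomial size), verify Parikh-validity of each by deciding the associated violating-run formula in the increasing fragment of Yen's path logic (an \textsc{ExpSpace} query, and \textsc{ExpSpace} is closed under complement), and keep a valid policy of minimum cost. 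Since no step exceeds exponential space, Parikh-SPP-U is solvable in exponential space.

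As an aside, one could also invoke Lemma~\ref{lemma2}, but only in the wrong direction: that lemma turns a Parikh-SPP instance into an equivalent Parikh-SPP-U instance with the same optimal cost, which would be useful for lifting a hypothetical Parikh-SPP-U algorithm up to Parikh-SPP, not for the containment asserted here. For the present statement, the trivial inclusion of instance classes already suffices, so there is nothing further to do than cite the theorem.

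I do not foresee any obstacle. The only point worth stating explicitly is the one above---that uniformity constrains the input rather than altering the problem---so the \textsc{ExpSpace} upper bound is inherited for free from Theorem~\ref{Parikh-SPP-complexity-membership}.
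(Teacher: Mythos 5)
Your argument is correct and matches the paper's intent exactly: the corollary is stated as an immediate consequence of Theorem~\ref{Parikh-SPP-complexity-membership} precisely because every Parikh-SPP-U instance is a Parikh-SPP instance, so the same \textsc{ExpSpace} procedure applies. Your aside about Lemma~\ref{lemma2} pointing in the wrong direction is also accurate and a good sanity check.
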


    We now show that both BC-Parikh-SPP and BC-Parikh-SPP-U are \textsc{ExpSpace}-complete. Clearly, it is sufficient to prove the result for BC-Parikh-SPP-U.
    \begin{theorem}\label{Parikh-SPP-complexity}
        BC-Parikh-SPP-U for labeled Petri nets is an \textsc{ExpSpace}-complete problem.
    \end{theorem}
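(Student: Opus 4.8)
The plan is to prove the two halves of the statement—membership and hardness—separately.

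\emph{Membership.} This is essentially immediate from the Corollary following Theorem~\ref{Parikh-SPP-complexity-membership}. Given an instance of BC-Parikh-SPP-U, I would enumerate the candidate policies $\mathcal{P}\subseteq\Sigma_p$ one at a time (at most $2^{|\Sigma_p|}$ of them), compute each cost $C(\mathcal{P})$ in polynomial time, discard those with $C(\mathcal{P})>W$, and test Parikh-validity of the surviving ones via the increasing fragment of Yen's path logic exactly as in the proof of Theorem~\ref{Parikh-SPP-complexity-membership}; the instance is accepted iff some surviving policy is valid. Each validity test is in \textsc{ExpSpace}, and only exponential space is needed to manage the enumeration, so BC-Parikh-SPP-U is in \textsc{ExpSpace}. (Equivalently, one runs the optimization algorithm of the Corollary to obtain the Parikh-optimal cost and compares it with $W$.)

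\emph{Hardness.} The plan is a polynomial-time reduction from the complement of the Petri net coverability problem. Recall that coverability---given a Petri net $N$, an initial marking $M_0$, and a target marking $M_t$, decide whether some marking $M'\ge M_t$ is reachable from $M_0$---is \textsc{ExpSpace}-complete; since \textsc{ExpSpace}$=$\textsc{coExpSpace}, non-coverability is \textsc{ExpSpace}-complete as well. From an instance $(N,M_0,M_t)$ I construct an LPN $\mathcal{N}$ as follows. Add a fresh place $p^*$ and a fresh transition $t_{cov}$ with $\mathcal{F}(p,t_{cov})=\mathcal{F}(t_{cov},p)=M_t(p)$ for every original place $p$ and $\mathcal{F}(t_{cov},p^*)=1$; thus $t_{cov}$ is enabled precisely at markings covering $M_t$, firing it does not otherwise disturb $N$, and $p^*$ can receive a token if and only if some reachable marking covers $M_t$. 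Give every transition a distinct label, all declared unprotectable, and add one further transition $t_a$ that does not affect the net (e.g., equipped with a private, initially-marked self-loop place) carrying a protectable label $a$ with $\gamma(a)=1$ and $c(a)=1$; set $\Sigma_p=\{a\}$. Set the security requirement $\ell(p^*)=1$ and $\ell(p)=0$ for every other place---in particular all places marked by $M_0$ have requirement $0$, as required---and set the budget $W=0$.

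\emph{Correctness and obstacle.} Since $c(a)=1>0=W$, the only policy of cost at most $W$ is $\mathcal{P}=\emptyset$, along any firing sequence of which the accumulated clearance is $0$. Hence $(\mathcal{N},M_0,\ell,\gamma,c,W)$ is a YES-instance of BC-Parikh-SPP-U iff $\emptyset$ is Parikh-valid, i.e., iff every reachable marking $M'$ satisfies ``$M'(p)>0\Rightarrow 0\ge\ell(p)$'' for all places $p$, i.e., iff no reachable marking puts a token on $p^*$, i.e., by the construction of $t_{cov}$, iff no reachable marking of $N$ covers $M_t$. Thus non-coverable instances map to YES-instances and coverable instances to NO-instances, the map is clearly polynomial-time computable, and therefore BC-Parikh-SPP-U is \textsc{ExpSpace}-hard; together with membership it is \textsc{ExpSpace}-complete. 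I do not expect a serious obstacle here: the genuinely hard analytic work is the membership direction already carried out via Yen's logic. The one point that needs a little care on the hardness side is the gadget $t_{cov}$, which funnels the property ``$M_t$ is covered'' into the single Boolean event ``$p^*$ is marked'', so that the universally quantified validity condition collapses exactly to non-coverability rather than to the stronger statement ``no place of $M_t$ is ever marked''. As a by-product, the constructed instance has a unique label per transition and a single protectable event, which already yields the robustness statements announced in the introduction.
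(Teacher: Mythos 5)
Your proposal is correct and follows essentially the same route as the paper: membership via the policy-enumeration-plus-Yen's-logic argument of Theorem~\ref{Parikh-SPP-complexity-membership}, and hardness via a reduction from (the complement of) coverability using a fresh transition that detects coverage of $M_t$ and marks a fresh secret place, combined with \textsc{ExpSpace} $=$ \textsc{coExpSpace}. The only cosmetic difference is how the requirement is made unsatisfiable: you set $W=0$ so that the sole protectable event is unaffordable, whereas the paper sets $\ell(p_{\mathit{new}})=2>\gamma(t_{\mathit{new}})=1$ with $W=1$ so that protecting it would not help anyway; both yield the same equivalence with non-coverability.
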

    \begin{proof}
        We reduce the coverability problem, which is \textsc{ExpSpace}-complete~\cite{esparza}, to BC-Parikh-SPP-U. Given a Petri net $N$ and a pair of markings $(M_0, M)$, the \emph{coverability} problem asks whether there is a marking $M'$ reachable from $M_0$ such that $M'$ covers $M$, i.e., whether $M_0 \xrightarrow{\sigma} M'$ with $M' \ge M$.

        Let $\mathcal{N}=(P,T,\mathcal F)$ and $(M_0, M)$ be an instance of the coverability problem. We construct a new Petri net $\mathcal{N}'$ as a copy of $\mathcal{N}$, and extend it by adding a new transition $t_{\mathit{new}}$ and a new place $p_{\mathit{new}}$. The new place is initially marked with zero tokens. For every place $p\in P$, we set
        \[
            \mathcal{F}(p, t_{\mathit{new}}) = M(p)
            \quad\text{and}\quad
            \mathcal{F}(t_{\mathit{new}},p_{\mathit{new}}) = 1,
        \]
        with all other entries of the flow function unchanged. The initial marking $M_0'$ of $\mathcal{N}'$ is obtained from $M_0$ by setting $M_0'(p) = M_0(p)$ for all $p \in P$ and $M_0'(p_{\mathit{new}}) = 0$. See Fig.~\ref{pnhardproof} for an illustration.
        
        \begin{figure}
          \centering
          \includegraphics[scale=1]{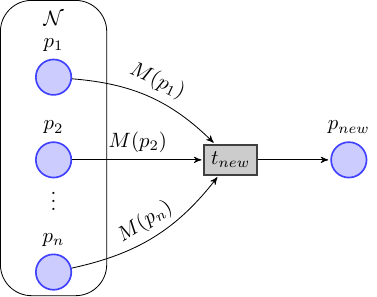}
          \caption{Sketch of the hardness construction from the proof of Theorem~\ref{Parikh-SPP-complexity}.}
          \label{pnhardproof}
        \end{figure}
        
        Next, we make all transitions observable by defining the labeling function $\lambda \colon T \cup \{t_{\mathit{new}}\} \to T \cup \{t_{\mathit{new}}\}$ as the identity, i.e., $\lambda(t) = t$ for every $t \in T \cup \{t_{\mathit{new}}\}$. We designate $t_{\mathit{new}}$ as the only protectable event, i.e., the set of protectable events is $\Sigma_p = \{t_{\mathit{new}}\}$. We set $\gamma(t_{\mathit{new}}) = 1$, $c(t_{\mathit{new}}) = 1$, $\ell(p_{\mathit{new}}) = 2$, and the budget limit to $W = 1$.
        
        Thus, the only possible protecting policies are $\mathcal{P} = \emptyset$ or $\mathcal{P} = \{t_{\mathit{new}}\}$, both of which have cost at most $W = 1$.

        By construction, the transition $t_{\mathit{new}}$ can fire in $\mathcal{N}'$ starting from $M_0'$ if and only if $M$ is coverable from $M_0$ in $\mathcal{N}$. Indeed, $t_{\mathit{new}}$ requires at least $M(p)$ tokens in each original place $p \in P$, and produces one token in $p_{\mathit{new}}$.        

        If $t_{\mathit{new}}$ fires, then $p_{\mathit{new}}$ contains one token. However,
        \[
            \ell(p_{\mathit{new}}) = 2 > 1 = \gamma(t_{\mathit{new}})\,,
        \]
        and hence every marking that enables $t_{\mathit{new}}$ violates the security requirement, regardless of whether $t_{\mathit{new}}$ is in $\mathcal{P}$. This means that the resulting instance is \emph{not} Parikh-valid.

        Conversely, if $t_{\mathit{new}}$ is never enabled, then $p_{\mathit{new}}$ remains unmarked, and the security requirement is vacuously satisfied for every policy of cost at most $W$.
        
        Therefore, the marking $M$ is coverable in $\mathcal{N}$ from $M_0$ if and only if the instance of BC-Parikh-SPP\nobreakdash-U defined above does \emph{not} satisfy Parikh-validity. Hence BC-Parikh-SPP\nobreakdash-U is \textsc{coExpSpace}-hard. Since \textsc{ExpSpace} is closed under complement and BC-Parikh-SPP-U belongs to \textsc{ExpSpace} by Theorem~\ref{Parikh-SPP-complexity-membership}, the proof is complete.
    \end{proof}

\section{Indicator-SPP}
    In this section, we analyze the complexity of the optimization and decision variants of $\chi$-SPP and $\chi$-SPP-U under the indicator semantics. Specifically, for every $a \in \Sigma_p$, we assume
    \[
        \chi_a(w) =
            \begin{cases}
                1 & \text{if $a$ occurs in $w$} \\
                0 & \text{otherwise.}
            \end{cases}
    \]
    We refer to the corresponding optimization problems as \emph{Indicator-SPP} and \emph{Indicator-SPP-U}, and to their decision versions as \emph{BC-indicator-SPP} and \emph{BC-indicator-SPP-U}.

    We first observe that the uniform and non-uniform variants are equivalent.
    \begin{lemma}\label{lemma3}
        For every instance of Indicator-SPP, there is an instance of Indicator-SPP-U such that the costs of the respective indicator-optimal protecting policies coincide.
    \end{lemma}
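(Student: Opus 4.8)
The plan is to mirror the reduction of Lemma~\ref{lemma2}: given an instance of Indicator-SPP, I would repeatedly replace each protectable event $a$ with $\gamma(a)=k>1$ by a sub-net that uses only uniform clearance. As in Lemma~\ref{lemma2}, I would first add a control place $p_{ctrl}$ carrying a single token with a self-loop on every transition, so that inserted sub-nets can be made atomic without spurious interleavings. The decisive difference from the Parikh setting is that, under the indicator semantics, all transitions sharing a label contribute to clearance only once; reusing the label $a$ on the $k$ chain transitions would therefore yield clearance $1$, not $k$. Instead, for every transition $t$ labeled by $a$ I would replace $t$ by a chain of $k$ fresh transitions and $k-1$ fresh places, labeled by $k$ fresh, pairwise distinct events $a_1,\dots,a_k$ with $\gamma(a_i)=1$, and guarded by $p_{ctrl}$ so the chain fires atomically; all transitions originally labeled $a$ are replaced by chains over the \emph{same} labels $a_1,\dots,a_k$, and the cost $c(a)$ is distributed so that $\sum_{i=1}^{k}c(a_i)=c(a)$.

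With this in place, the easy direction is the policy embedding. By atomicity, firing sequences of the original and of the new net are in bijection, and a word $w'$ of the new net contains $a_i$ (for some, equivalently every, $i$) exactly when the corresponding word $w$ of the original net contains $a$. Hence an original policy $\mathcal P$ maps to $\mathcal P'=\bigcup_{a\in\mathcal P}\{a_1,\dots,a_{\gamma(a)}\}$ with $C(\mathcal P')=C(\mathcal P)$, and along corresponding words the clearance accumulated by $\mathcal P'$ under uniform clearance equals the clearance accumulated by $\mathcal P$; so $\chi$-validity is preserved, and the optimum of the new instance is at most that of the original instance.

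The main obstacle is the converse inequality, which has no counterpart in Lemma~\ref{lemma2}: in the new instance a policy may protect a \emph{proper} subset of a block $\{a_1,\dots,a_k\}$, and, since clearances from different blocks add up, such a fractional policy can in principle be $\chi$-valid at strictly lower cost than any embedded policy $\mathcal P'$. The technical heart of the proof is to neutralize this. I would aim to show that every $\chi$-valid policy of the new instance can be rewritten, without increasing its cost, into one that treats each block $\{a_1,\dots,a_k\}$ in an all-or-nothing fashion, so that embedded policies are cofinal among the $\chi$-valid policies and the two optima coincide. I expect this step to require enriching the construction beyond a bare chain --- for instance with auxiliary secret places, or with a careful choice of the cost split together with adapted security requirements --- so that protecting only part of a block is never beneficial; getting this normalization right, and proving it preserves both validity and cost, is where I expect the difficulty to lie, while the remaining items (atomicity of the chains, bijectivity of runs, and cost bookkeeping) are routine.
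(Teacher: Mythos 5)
Your construction is essentially the one the paper uses: each transition labeled $a$ with $\gamma(a)=k>1$ is replaced by an atomic chain of $k$ transitions carrying $k$ fresh labels $a_1,\dots,a_k$ with $\gamma(a_i)=1$, and the cost $c(a)$ is distributed over the block (the paper puts $c(a_1)=c(a)$ and $c(a_2)=\dots=c(a_k)=0$). The embedding direction you call easy is indeed all that the paper argues. But the obstacle you single out --- that a policy of the uniform instance may protect a \emph{proper} subset of a block $\{a_1,\dots,a_k\}$ and thereby buy ``fractional'' clearance that is unavailable in the original instance --- is genuine, and your proposal does not close it: you only state that some normalization or enrichment of the gadget should make partial protection unprofitable, without exhibiting one. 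Note that validity is monotone in the policy (enlarging $\mathcal P$ can only increase the accumulated clearance, since all summands are nonnegative), so the set of valid policies is upward closed; consequently one cannot enforce an all-or-nothing choice by adding auxiliary secret places or extra runs, because any such addition only imposes further lower bounds on clearance and never penalizes protecting $a_2$ without $a_1$. The normalization you hope for is therefore not a routine patch.

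To see that the worry is not hypothetical, consider a net with one token in $p_0$, a single transition $t\colon p_0\to p_s$ labeled $a\in\Sigma_p$ with $\gamma(a)=2$ and $c(a)=1$, and $\ell(p_s)=1$. The original optimum is $1$, since the empty policy is invalid. After the transformation the only nonempty maximal word is $a_1a_2$, and the policy $\{a_2\}$ has cost $0$ yet accumulates clearance $1\ge\ell(p_s)$, so the uniform instance has optimum $0$. An even split $c(a_i)=c(a)/k$ does not help either: one can then buy $j<k$ units of clearance at price $jc(a)/k$, whereas the original instance forces buying all $k$ units at once. So the gap you identified is real and is exactly the step missing from the chain-with-fresh-labels construction; as written, neither your sketch nor that construction establishes the claimed equality of optimal costs, and repairing it requires a genuinely different mechanism for tying the members of a block together.
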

    \begin{proof}
        The construction follows the same principle as in the proof of Lemma~\ref{lemma2}, with one modification. Suppose a transition $t$ in the original net is labeled by an event $a \in \Sigma_p$ with $\gamma(a) = k > 1$ and cost $c(a)$. In the uniform setting, we replace $t$ with a sequence of $k$ transitions $t_1,\dots,t_k$ and introduce $k$ fresh labels $a_1,\dots,a_k$. We assign
        $
            \gamma(a_i) = 1
        $
        for all $i = 1,\dots,k$, and set the costs $c(a_1) = c(a)$, and $c(a_2) = \dots = c(a_k) = 0$. Thus, under indicator semantics, the original event $a$ contributes clearance $\gamma(a) = k$ if and only if all events $a_1,\dots,a_k$ occur along the corresponding execution, each contributing a single unit. Since exactly one of the newly introduced events carries the original cost, the total cost of an optimal protection policy is preserved.
        It follows that every optimal solution to the original instance of Indicator-SPP has the same cost as the corresponding optimal solution to the constructed instance of Indicator\nobreakdash-SPP\nobreakdash-U.    
    \end{proof}

    We now show that Indicator-SPP, and hence also Indicator-SPP-U, can be solved in exponential space.
    \begin{theorem}\label{Indikator-SPP-complexity}
        Indicator-SPP and Indicator-SPP-U for labeled Petri nets are solvable in exponential space.
    \end{theorem}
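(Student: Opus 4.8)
The plan is to mirror the proof of Theorem~\ref{Parikh-SPP-complexity-membership}: enumerate all $2^{|\Sigma_p|}$ candidate policies in exponential space, compute the cost of each in polynomial time, retain the cheapest indicator-valid one, and reduce the indicator-validity test for a fixed policy $\mathcal{P}$ to a satisfiability question in the increasing fragment of Yen's path logic. The only point at which the indicator semantics forces a change is the construction of the auxiliary net $\mathcal{N}_{\mathcal{P}}$ that feeds the counter place $p_c$. Under the Parikh semantics, $p_c$ accumulates $\gamma(a)$ tokens on \emph{every} firing of a transition labeled by $a\in\mathcal{P}$; under the indicator semantics, $p_c$ must gain $\gamma(a)$ tokens only on the \emph{first} firing of a transition carrying the label $a$. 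The main obstacle is to enforce this ``at most once per label'' behaviour inside an ordinary Petri net, which has neither inhibitor arcs nor priorities to force such a choice.

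I would realize it with a complementary flag gadget. For each $a\in\mathcal{P}$, add two places $f_a$ and $\bar f_a$, with $f_a$ initially holding one token and $\bar f_a$ empty, so that the invariant $M(f_a)+M(\bar f_a)=1$ is maintained. Replace every transition $t$ labeled by $a$ by two transitions, both labeled $a$: a \emph{first} copy $t^{f}$, whose preset is the preset of $t$ together with $f_a$ and whose postset is the postset of $t$ together with $\bar f_a$ and $\gamma(a)$ tokens into $p_c$; and a \emph{subsequent} copy $t^{s}$, whose preset is the preset of $t$ together with $\bar f_a$ and whose postset is the postset of $t$ together with $\bar f_a$ (a self-loop on $\bar f_a$), with no arc to $p_c$. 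Since $t^{s}$ needs a token in $\bar f_a$, no subsequent copy can fire before some $a$-labeled transition has fired once, and once $f_a$ is emptied no first copy can fire again; hence the firing sequences of $\mathcal{N}_{\mathcal{P}}$, after projecting $t^{f}$ and $t^{s}$ back to $t$, are in bijection with the firing sequences of $\mathcal{N}$, they reach the same markings on the original places, and any such sequence $\sigma$ leads to a marking with $\mu(p_c)=\sum_{a\in\mathcal{P}}\gamma(a)\,\chi_a(\lambda(\sigma))$ for the indicator $\chi$. Transitions whose label lies in $\Sigma_p\setminus\mathcal{P}$ or in $\Sigma_{up}$ are left untouched, and the construction is clearly of polynomial size.

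With $\mathcal{N}_{\mathcal{P}}$ in hand, the remaining argument is verbatim that of Theorem~\ref{Parikh-SPP-complexity-membership}. A violation of indicator-validity of $\mathcal{P}$ is expressed by the same path formula
\[
    \exists \mu_1 \exists \sigma \Bigl((\mu_0 \xrightarrow{\sigma} \mu_1) \land \bigvee_{i=1}^{m}\bigl(\mu_1(p_i) > 0 \land \mu_1(p_c) < \ell(p_i)\bigr)\Bigr),
\]
which now says that some marking reachable in $\mathcal{N}_{\mathcal{P}}$ marks a place $p_i$ of $\mathcal{N}$ while the \emph{distinct} protected labels seen so far grant strictly less than $\ell(p_i)$ clearance, i.e., $\sum_{a\in\mathcal{P}}\gamma(a)\,\chi_a(\lambda(\sigma))<\ell(p_i)$. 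The predicate trivially implies $\mu_n\ge\mu_1$ because there is a single step, $\mu_0(p_c)=0$ so $\mu_1(p_c)$ may be used instead of $(\mu_1-\mu_0)(p_c)$, and $\ell$ is a constant vector and may be referenced; thus the formula is admissible and, by~\cite{ATIG2011}, its satisfiability is decidable in exponential space. By closure of \textsc{ExpSpace} under complement, indicator-validity of $\mathcal{P}$ is checkable in exponential space; since there are only exponentially many policies and the workspace can be reused across them, Indicator-SPP is solvable in exponential space. Indicator-SPP-U is the special case $\gamma\equiv 1$ (alternatively, apply Lemma~\ref{lemma3}), hence also solvable in exponential space.

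The step I expect to need the most care in the write-up is the correctness of the flag gadget: that it neither creates nor destroys firing sequences and that it deposits into $p_c$ exactly once per protected label, regardless of the scheduling of the copies. If this turns out to be delicate to present, a fallback is to avoid $p_c$ altogether and, for each policy $\mathcal{P}$ and each subset $S\subseteq\mathcal{P}$ with $\gamma(S)<\ell(p_i)$ for some $p_i$, ask whether $p_i$ is coverable in the subnet obtained from $\mathcal{N}$ by deleting every transition whose label lies in $\mathcal{P}\setminus S$; monotonicity of $\gamma$ shows this family of polynomial-size queries is satisfiable iff $\mathcal{P}$ is invalid, and running one \textsc{ExpSpace} coverability check per $(\mathcal{P},S,p_i)$ triple — exponentially many in total, with reusable workspace — still keeps the procedure in exponential space.
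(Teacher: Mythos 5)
Your proposal is correct and follows essentially the same route as the paper: the complementary flag gadget with $f_a$/$\bar f_a$ and the first/subsequent copies of each $a$-labeled transition is exactly the paper's $\mathit{first}_a$/$\mathit{other}_a$ construction, and the subsequent reduction to satisfiability in the increasing fragment of Yen's path logic is verbatim the paper's argument. Your fallback via coverability queries over subsets $S\subseteq\mathcal{P}$ is a valid alternative, but the main line of reasoning coincides with the published proof.
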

    \begin{proof}
        We show that Indicator-SPP is decidable in \textsc{ExpSpace}; the same argument applies to Indicator-SPP-U. Let $\mathcal N=(P,T,\mathcal F,\Sigma,\lambda)$ be a labeled Petri net with $P=\{p_1,\dots,p_m\}$. Since a protecting policy is a subset of the protectable events, we can enumerate all candidate policies $\mathcal P\subseteq\Sigma_p$ using exponential space. For each fixed candidate policy $\mathcal P$, we show how to verify indicator-validity in exponential space; the optimal solution can be obtained by keeping the cheapest policy that is declared valid.

        Fix a candidate policy $\mathcal P$. We construct from $\mathcal N$ a labeled Petri net $\mathcal N_{\mathcal P}$ that records, along any run, which protectable events from $\mathcal P$ have occurred at least once, using a single global counter place ${\mathit{counter}}$. The construction is as follows (see Fig.~\ref{fig02} for intuition).

        \begin{figure}
            \centering
            \includegraphics[scale=1]{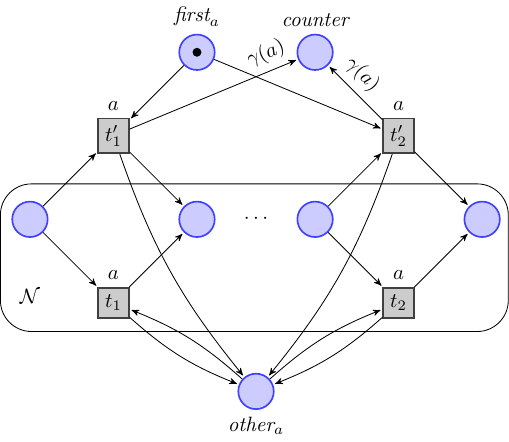}
            \caption{Sketch of the reduction from the proof of Theorem~\ref{Indikator-SPP-complexity}.}
            \label{fig02}
        \end{figure}  

        \begin{enumerate}
            \item Start with a copy of $\mathcal N$.
            
            \item Add one fresh place ${\mathit{counter}}$ (the global counter) initialized with $0$ tokens.
            
            \item For every protectable event $a\in\Sigma_p$, we introduce two fresh places $\mathit{first}_a$ and $\mathit{other}_a$. Initialize $\mathit{first}_a$ with one token and $\mathit{other}_a$ with zero tokens.
            
            \item For every transition $t\in T$ with $\lambda(t)=a\in\Sigma_p$, we add a copy $t'$ that has the same arcs to the original places of $\mathcal N$ (so $t'$ simulates the effect of $t$ on the original marking). Specificaly, for every $p\in P$, we set
            \[
                \mathcal F(p,t')=\mathcal F(p,t) \text{ and } \mathcal F(t',p)=\mathcal F(t,p).
            \]
            Additionally, for the copy $t'$, we add the arcs
            \begin{align*}
                & \mathcal F(\mathit{first}_a,t')=1, \  
                \mathcal F(t',\mathit{other}_a)=1, \text{ and }\\
                & \mathcal F(t',{\mathit{counter}})=\gamma(a).
            \end{align*}
            Thus, $t'$ can fire only while a token is present in $\mathit{first}_a$, and upon firing it moves that token to $\mathit{other}_a$ and deposits $\gamma(a)$ tokens to the global counter.
            
            \item Finally, we add self-loop arcs between each original transition $t$ with label $a$ and the place $\mathit{other}_a$, i.e.,
            \[
                \mathcal F(\mathit{other}_a,t)=\mathcal F(t,\mathit{other}_a)=1,
            \]
            so that once $\mathit{other}_a$ is marked, the original transitions labeled by $a$ remain enabled as in the original net but their copies $t'$ are disabled.
        \end{enumerate}
        
        Intuitively, for every label $a$ the first occurrence of an $a$-labeled transition along a run can be performed by its copy $t'$, which (and only then) contributes $\gamma(a)$ tokens to the single global counter. Subsequent occurrences of label $a$ in the same run are simulated by the original transitions (enabled via $\mathit{other}_a$) but do not increase the counter. Hence, in $\mathcal N_{\mathcal P}$ the global counter ${\mathit{counter}}$ correctly records the total clearance contributed under indicator semantics by the set $\mathcal P$ along any run.
        
        To verify that $\mathcal P$ is \emph{not} indicator-invalid (i.e., violates the security requirement), we search for a run that reaches some marking in which some place $p_i$ is occupied while the counter holds fewer than $\ell(p_i)$ tokens. This condition can be expressed in the increasing fragment of Yen's path logic by the formula
        \begin{multline*}
            \exists \mu_1 \exists \sigma \Bigl((\mu_0 \xrightarrow{\sigma} \mu_1)\ \land\ \\
            \bigvee_{i=1}^{m}\bigl(\mu_1(p_i)>0 \land \mu_1({\mathit{counter}})<\ell(p_i)\bigr)\Bigr).
        \end{multline*}
        By construction, the formula is satisfiable in $\mathcal N_{\mathcal P}$ if and only if there is a run of $\mathcal N$ whose indicator-clearance (w.r.t.\ $\mathcal P$) is insufficient for some place reached by that run; thus satisfiability exactly captures indicator-invalidity of $\mathcal P$.
        
        Satisfiability of formulas in the increasing fragment of Yen's path logic is decidable in \textsc{ExpSpace}~\cite{ATIG2011}. Therefore, for each fixed policy $\mathcal P$, we can determine indicator-validity using exponential space. Combining this with the fact that there are at most exponentially many candidate policies, we obtain an \textsc{ExpSpace} algorithm for Indicator-SPP. The same argument applies to Indicator-SPP-U.
      
        Hence, Indicator-SPP and Indicator-SPP-U are solvable in exponential space.
    \end{proof}

    As an immediate consequence of this result and the proof of Theorem~\ref{Parikh-SPP-complexity}, we obtain the following result.
    \begin{theorem}
        BC-indicator-SPP and BC-indicator-SPP-U for labeled Petri nets are \textsc{ExpSpace}-complete.
    \end{theorem}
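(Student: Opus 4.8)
The plan is to obtain both directions essentially for free, reusing the two preceding theorems, so the argument splits into a membership part and a hardness part.

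\textbf{Membership.} I would first argue that BC-indicator-SPP (and a fortiori BC-indicator-SPP-U) lies in \textsc{ExpSpace}. The simplest route is to reduce the decision problem to the optimization problem of Theorem~\ref{Indikator-SPP-complexity}: solve Indicator-SPP in exponential space, obtain the cost of an indicator-optimal policy, and accept iff this cost is at most the budget $W$. Alternatively, one can argue directly, mirroring the proof of Theorem~\ref{Indikator-SPP-complexity}: enumerate all candidate policies $\mathcal P\subseteq\Sigma_p$ in exponential space, compute $C(\mathcal P)$ in polynomial time, discard those with $C(\mathcal P)>W$, and for the remaining ones test indicator-validity in exponential space using the net $\mathcal N_{\mathcal P}$ together with the increasing-fragment Yen formula and closure of \textsc{ExpSpace} under complement. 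The uniform variant is handled by the same construction.

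\textbf{Hardness.} Here I would observe that it suffices to prove \textsc{ExpSpace}-hardness of BC-indicator-SPP-U, since BC-indicator-SPP generalizes it (uniform clearance is a special case). The key point is that the coverability reduction from the proof of Theorem~\ref{Parikh-SPP-complexity} already works verbatim under the indicator semantics. In the instance constructed there, $t_{\mathit{new}}$ is the only protectable event, with $\gamma(t_{\mathit{new}})=1$, $c(t_{\mathit{new}})=1$, $\ell(p_{\mathit{new}})=2$, $W=1$, and the only candidate policies $\emptyset$ and $\{t_{\mathit{new}}\}$ both have cost at most $W$. Along any run, the clearance accumulated under the indicator semantics is at most $\gamma(t_{\mathit{new}})\cdot 1 = 1 < 2 = \ell(p_{\mathit{new}})$ — exactly as under the Parikh semantics, the reason being that $t_{\mathit{new}}$ raises the bound whether it fires once or several times, so the two semantics are indistinguishable on this instance. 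Hence the constructed instance admits an indicator-valid policy of cost at most $W$ iff $p_{\mathit{new}}$ is never marked iff $t_{\mathit{new}}$ is never enabled iff $M$ is not coverable from $M_0$ in $\mathcal N$. Therefore coverability reduces to the complement of BC-indicator-SPP-U, giving \textsc{coExpSpace}-hardness, hence \textsc{ExpSpace}-hardness because \textsc{ExpSpace}$=$\textsc{coExpSpace}. Combining the two parts yields \textsc{ExpSpace}-completeness of both BC-indicator-SPP and BC-indicator-SPP-U.

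I do not expect a real combinatorial obstacle here; the only step I would double-check carefully is that the Theorem~\ref{Parikh-SPP-complexity} instance is genuinely semantics-insensitive — i.e., that no alternative run can push the indicator-clearance above $1$ (it cannot, since $t_{\mathit{new}}$ is the sole protectable event and even a single counted occurrence contributes only $\gamma(t_{\mathit{new}})=1$) — and that the direction of the reduction and the complementation bookkeeping are kept straight. If one preferred a self-contained hardness proof rather than a pointer to the earlier construction, the same net, markings, and parameters can simply be restated, noting explicitly that $\chi_{t_{\mathit{new}}}$ takes value at most $1$ on every run.
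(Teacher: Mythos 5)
Your proposal is correct and follows essentially the same route as the paper: membership via the policy-enumeration/Yen-formula argument of Theorem~\ref{Indikator-SPP-complexity}, and hardness by reusing the coverability reduction of Theorem~\ref{Parikh-SPP-complexity} unchanged, noting that Parikh and indicator semantics coincide on that instance. Your justification of the latter point (the indicator clearance is bounded by $\gamma(t_{\mathit{new}})=1<2=\ell(p_{\mathit{new}})$ no matter how often $t_{\mathit{new}}$ fires) is in fact slightly more careful than the paper's remark that $t_{\mathit{new}}$ fires at most once, which need not hold when $M$ has small components.
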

    \begin{proof}
        \textsc{ExpSpace}-hardness of BC-indicator-SPP-U, and thus also of BC-indicator-SPP, follows from the hardness proof of BC-Parikh-SPP-U in Theorem~\ref{Parikh-SPP-complexity}. In that construction, the alphabet of protectable events consists of a single label corresponding to $t_{\mathit{new}}$, and the transition associated with this label can occur at most once in any run. Hence, under indicator semantics, the behavior of the system remains unchanged.
    
        Membership of both problems in \textsc{ExpSpace} follows from the same argument as in Theorem~\ref{Indikator-SPP-complexity}. Therefore, BC-indicator-SPP and BC-indicator-SPP-U are \textsc{ExpSpace}-complete.
    \end{proof}

\section{Discussion and Conclusion}
    From the constructions in the previous sections, it follows that it is not necessary to restrict the family $\chi$ to only Parikh functions or only indicator functions. Instead, the two types can be freely combined. In this section, let
    \begin{multline*}
        \Pi = \{\, \chi_a \mid a \in \Sigma_p,\;
        \chi_a \text{ is either the Parikh function} \\
        \text{or the indicator function} \}\,.
    \end{multline*}

    By combining the previous results, we obtain the following.
    \begin{corollary}~
        \begin{enumerate}
            \item For every instance of $\Pi$-SPP, there exists an instance of $\Pi$-SPP-U such that the costs of the respective $\Pi$-optimal protecting policies coincide.
        
            \item Both $\Pi$-SPP and $\Pi$-SPP-U for labeled Petri nets are solvable in exponential space.
        
            \item Both BC-$\Pi$-SPP and BC-$\Pi$-SPP-U for labeled Petri nets are \textsc{ExpSpace}-complete.
        \end{enumerate}
    \end{corollary}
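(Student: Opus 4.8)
All three claims follow by combining, label‑by‑label, the constructions already established for the pure Parikh and pure indicator semantics. For Part~(1), given an instance of $\Pi$-SPP we process each protectable event $a$ with $\gamma(a)=k>1$ according to its type: if $\chi_a$ is the Parikh function, we replace every $a$-labeled transition by the chain of $k$ transitions all labeled $a$, each of clearance~$1$, together with the shared control place, exactly as in the proof of Lemma~\ref{lemma2}; if $\chi_a$ is the indicator function, we replace every $a$-labeled transition by the chain of $k$ transitions with fresh labels $a_1,\dots,a_k$ of clearance~$1$, assigning cost $c(a)$ to $a_1$ and cost~$0$ to $a_2,\dots,a_k$, as in the proof of Lemma~\ref{lemma3}. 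These gadgets act on disjoint sets of fresh places (and, in the indicator case, fresh labels) and interact with the original net only through the pre- and post-sets of the replaced transition, so they may be applied simultaneously. As shown in the two lemmata, each gadget is atomic, faithfully simulates the original transition, and preserves the cost of every policy; hence the optimal costs of the $\Pi$-SPP and $\Pi$-SPP-U instances coincide.

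For Part~(2), we enumerate all candidate policies $\mathcal P\subseteq\Sigma_p$ in exponential space and keep the cheapest valid one, so it suffices to test $\Pi$-validity of a fixed $\mathcal P$ in exponential space. From $\mathcal N$ we build $\mathcal N_{\mathcal P}$ with a single counter place $p_c$: for each $a\in\mathcal P$ of Parikh type we add an arc of multiplicity $\gamma(a)$ from every $a$-labeled transition to $p_c$, as in the proof of Theorem~\ref{Parikh-SPP-complexity-membership}; for each $a\in\mathcal P$ of indicator type we add the $\mathit{first}_a/\mathit{other}_a$ gadget feeding $\gamma(a)$ tokens to $p_c$ on the first occurrence of $a$, as in the proof of Theorem~\ref{Indikator-SPP-complexity}. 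The per-event gadgets are mutually independent, so along any run $\sigma$ the place $p_c$ holds exactly $\sum_{a\in\mathcal P}\gamma(a)\cdot\chi_a(\lambda(\sigma))$ tokens. A violation of $\Pi$-validity is therefore captured by the increasing-fragment Yen formula
\begin{multline*}
    \exists \mu_1 \exists \sigma \Bigl((\mu_0 \xrightarrow{\sigma} \mu_1)\ \land\ \\
    \bigvee_{i=1}^{m}\bigl(\mu_1(p_i)>0 \land \mu_1(p_c)<\ell(p_i)\bigr)\Bigr),
\end{multline*}
whose satisfiability is in \textsc{ExpSpace}~\cite{ATIG2011}; by closure of \textsc{ExpSpace} under complement, $\Pi$-validity is decidable in exponential space, and the whole enumeration stays within exponential space.

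For Part~(3), membership of BC-$\Pi$-SPP and BC-$\Pi$-SPP-U in \textsc{ExpSpace} is immediate from Part~(2). For hardness, note that in the reduction from coverability in the proof of Theorem~\ref{Parikh-SPP-complexity} the unique protectable event labels only $t_{\mathit{new}}$, which fires at most once along any run; hence its Parikh count and its indicator value agree on every run, and that very instance is a legitimate instance of BC-$\Pi$-SPP-U under any assignment of Parikh/indicator types to the single protectable event. The argument of Theorem~\ref{Parikh-SPP-complexity} then yields \textsc{coExpSpace}-hardness, and closure of \textsc{ExpSpace} under complement gives \textsc{ExpSpace}-hardness of BC-$\Pi$-SPP-U, and hence of BC-$\Pi$-SPP.

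The only point requiring attention, in Parts~(1) and~(2), is that the Parikh and indicator gadgets applied to distinct labels of the same net do not interfere; this holds because each gadget owns its fresh places and labels and touches the rest of the construction only through the neighborhood of the replaced transition and, in Part~(2), through additive contributions to the shared counter $p_c$. No genuinely new difficulty arises beyond those already resolved in Lemmata~\ref{lemma2}--\ref{lemma3} and Theorems~\ref{Parikh-SPP-complexity-membership} and~\ref{Indikator-SPP-complexity}.
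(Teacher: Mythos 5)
Your proposal is correct and follows exactly the route the paper intends: the paper gives no explicit proof for this corollary beyond ``by combining the previous results,'' and your write-up is precisely that combination---applying the Lemma~\ref{lemma2} gadget to Parikh-type labels and the Lemma~\ref{lemma3} gadget to indicator-type labels for Part~(1), merging the counter constructions of Theorems~\ref{Parikh-SPP-complexity-membership} and~\ref{Indikator-SPP-complexity} into one shared counter place for Part~(2), and reusing the coverability reduction (where $t_{\mathit{new}}$ fires at most once, so both semantics agree) for Part~(3). Your explicit remark that the per-label gadgets are disjoint and interact only additively through the shared counter is the one detail the paper leaves implicit, and it is handled correctly.
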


    We also obtain the following stronger robustness result.
    \begin{theorem}
        BC-$\Pi$-SPP and BC-$\Pi$-SPP-U for labeled Petri nets are \textsc{ExpSpace}-complete even under any of the following restrictions:
        \begin{itemize}
            \item every transition has a unique label,
            \item the set of protectable events is a singleton,
            \item there is only one transition labeled by a protectable event, or
            \item the sole transition labeled by a protectable event can fire at most once.
        \end{itemize}
    \end{theorem}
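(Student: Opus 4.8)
The plan is to reuse the reduction from the coverability problem given in the proof of Theorem~\ref{Parikh-SPP-complexity}, observing that it already meets nearly all of the listed restrictions, and to dispatch the remaining one with a single extra place. Membership comes for free: each restricted variant is a special case of $\Pi$-SPP, and the uniform versions are special cases of the non-uniform ones (take $\gamma \equiv 1$), so the \textsc{ExpSpace} upper bounds of Theorems~\ref{Parikh-SPP-complexity-membership} and~\ref{Indikator-SPP-complexity} apply directly. Thus the whole argument reduces to establishing \textsc{ExpSpace}-hardness of BC-$\Pi$-SPP-U under all four restrictions at once.

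For hardness I would inspect the net $\mathcal{N}'$ built in the proof of Theorem~\ref{Parikh-SPP-complexity}: its labeling function is the identity, so every transition has a unique label; the set of protectable events is the singleton $\{t_{\mathit{new}}\}$; and, $\lambda$ being injective, $t_{\mathit{new}}$ is the only transition labeled by a protectable event. Thus three of the four restrictions hold verbatim. To also force $t_{\mathit{new}}$ to fire at most once, I would add a fresh place $p_{\mathit{once}}$ carrying a single token in the initial marking together with the arc $\mathcal{F}(p_{\mathit{once}}, t_{\mathit{new}}) = 1$ and no arc back, leaving all other arcs unchanged. Then $t_{\mathit{new}}$ is fireable at most once, the reachability behavior of the copied subnet $\mathcal{N}$ is untouched, and $t_{\mathit{new}}$ remains enabled at some reachable marking exactly when $M$ is coverable from $M_0$. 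The correctness analysis of Theorem~\ref{Parikh-SPP-complexity} then carries over unchanged: firing $t_{\mathit{new}}$ marks $p_{\mathit{new}}$ and, since $\ell(p_{\mathit{new}}) = 2 > 1 = \gamma(t_{\mathit{new}})$, violates validity for every policy of cost at most $W = 1$, whereas if $t_{\mathit{new}}$ never fires, validity is vacuous. Hence $M$ is coverable iff the constructed instance is not valid, which gives \textsc{coExpSpace}-hardness and therefore \textsc{ExpSpace}-hardness, with all four restrictions in force simultaneously.

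It remains to pass from Parikh semantics to $\Pi$. Since $t_{\mathit{new}}$ occurs at most once in any run of the constructed net, its Parikh count coincides with its indicator value on every firing sequence, so the instance has exactly the same valid policies under the Parikh, the indicator, and any mixed $\Pi$-semantics (there being only one protectable event, nothing is actually mixed). The same reduction therefore proves hardness of BC-$\Pi$-SPP-U, and hence of BC-$\Pi$-SPP, and combining this with membership yields \textsc{ExpSpace}-completeness under each of the stated restrictions. The step I expect to need a careful check is precisely that the $p_{\mathit{once}}$ gadget does not distort the coverability instance encoded in the original places---this is immediate since $p_{\mathit{once}}$ is fresh and its only outgoing arc goes to $t_{\mathit{new}}$---and that adding a place (rather than a transition) is compatible with the unique-label restriction, which it plainly is; everything else is a direct reading of the already-verified construction.
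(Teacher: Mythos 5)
Your proposal is correct and follows essentially the same route as the paper: the paper's entire proof is the one-line observation that the hardness reduction of Theorem~\ref{Parikh-SPP-complexity} already satisfies all the listed constraints, combined with the membership results already established. In fact, your argument is slightly more careful than the paper's on one point. In the unmodified reduction, nothing prevents $t_{\mathit{new}}$ from firing more than once: after it consumes $M(p)$ tokens from each place, the copied subnet may replenish those tokens and re-enable $t_{\mathit{new}}$, so the fourth restriction (``fires at most once'') is not literally satisfied by the construction as written. Your $p_{\mathit{once}}$ gadget---a fresh initially marked place with a single non-returned input arc to $t_{\mathit{new}}$---repairs this without disturbing the coverability encoding or any of the other three restrictions, and your observation that the Parikh and indicator counts then coincide on every run justifies transferring the hardness to the mixed $\Pi$-semantics. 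The only cosmetic remark is that membership for the mixed semantics is most cleanly cited from the paper's corollary on $\Pi$-SPP (which combines the two counter constructions) rather than from Theorems~\ref{Parikh-SPP-complexity-membership} and~\ref{Indikator-SPP-complexity} separately, but for the restricted instances at hand, where only one protectable event exists, this distinction is immaterial.
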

    \begin{proof}
        The \textsc{ExpSpace}-hardness reduction in Theorem~\ref{Parikh-SPP-complexity} already satisfies each of the listed constraints.
    \end{proof}    

    To summarize, our results show that the synthesis of optimal protecting policies in labeled Petri nets is \textsc{ExpSpace}-complete, independently of whether events are counted by Parikh or indicator semantics, uniformly or not, or even under extreme structural simplifications. In the future, our goal is to characterize tractable fragmets and to investigate practical algorithms to solve the secret protection problem for labeled Petri nets.

\bibliographystyle{IEEEtranS}
\bibliography{mybib}

\end{document}